\documentclass[10pt, conference]{IEEEtran}
\IEEEoverridecommandlockouts

\usepackage{cite}
\usepackage{amsmath,amssymb,amsfonts}
\usepackage{algorithmic}
\usepackage{amsthm}
\usepackage{graphicx}
\usepackage{textcomp}
\usepackage{xcolor}
\usepackage{subcaption} 

\newtheorem{definition}{Definition}
\newtheorem{proposition}{Proposition}
\newtheorem{thm}{Theorem}
\def\BibTeX{{\rm B\kern-.05em{\sc i\kern-.025em b}\kern-.08em
    T\kern-.1667em\lower.7ex\hbox{E}\kern-.125emX}}
\begin{document}

\title{Lightweight Multi-Vehicle Collaborative Perception Acceleration with Fusion Position Adjustment
\thanks{The work presented in this paper is funded by the National Natural Science Foundation of China  No. 62201079, the Beijing Natural Science Foundation No. L232051.}
}


\author{\IEEEauthorblockN{Wenzhao Zhang\IEEEauthorrefmark{1}\IEEEauthorrefmark{3}, Shujun Han\IEEEauthorrefmark{2}\IEEEauthorrefmark{3}, Haixiao Gao\IEEEauthorrefmark{1}\IEEEauthorrefmark{3}, Mengying Sun\IEEEauthorrefmark{1}\IEEEauthorrefmark{3}, Bizhu Wang\IEEEauthorrefmark{1}\IEEEauthorrefmark{3}, Xiaodong Xu\IEEEauthorrefmark{1}\IEEEauthorrefmark{3}\IEEEauthorrefmark{4}}
\IEEEauthorblockA{\IEEEauthorrefmark{1}State Key Laboratory of Networking and Switching Technology\\
\IEEEauthorrefmark{2}National Engineering Research Center for Mobile Network Technologies \\
\IEEEauthorrefmark{3}Beijing University of Posts and Telecommunications, Beijing, China \\
\IEEEauthorrefmark{4}Department of Broadband Communication, Peng Cheng Laboratory, Shenzhen, Guangdong, China\\
Email: \{zhangwenzhao, hanshujun, haixiao, smy\_bupt, wangbizhu\_7, xuxiaodong\}@bupt.edu.cn}
}

\maketitle

\begin{abstract}
Multi-vehicle collaborative perception (MvCP) is considered as a key technology to facilitate automated driving (AD), where real-time MvCP under limited resources is significant for reliable AD. In this paper, we formulate a lightweight acceleration scheme for intermediate-fusion (IF) MvCP, which can adapt to both situations of limited computation and communication resources. We provide a relaxed definition \textit{conditional additivity} and analyze the \textit{conditional additivity} for various DNN linear layers. On this basis, we focus on the IF-MvCP based on additive feature fusion, and derive the MvCP precision consistency of the forward and backward feature fusion position (FP) adjustments among linear layers. Through experiments, we further validate the precision consistency of the FP adjustment method. Moreover, we propose an \underline{F}P \underline{a}djustment among \underline{l}inear \underline{l}ayers (FALL) scheme for MvCP acceleration without precision loss theoretically. Simulation results show that the proposed FALL can reduce MvCP latency by up to 74.8\% under limited communication resources and by up to 30.3\% under limited computation resources.
\end{abstract}

\begin{IEEEkeywords}
Multi-vehicle collaborative perception, edge intelligence, conditional additivity, inference acceleration, fusion position adjustment.
\end{IEEEkeywords}

\section{Introduction}
With the breakthroughs in artificial intelligence (AI), automated driving (AD) technology becomes a critical component of intelligent transportation systems (ITS), where accurate positioning and environment perception are significant for reliable intelligent driving \cite{gao2024survey}\cite{yang2024positioning}. Despite recent advances in single-vehicle perception as the development of multi-modal sensors and computer vision techniques, the challenge for accurate perception remains due to the occlusions and sparse sensor observations \cite{wang2025occlusion}. As a solution, researchers investigated multi-vehicle collaborative perception (MvCP) technology leveraging vehicle-to-vehicle (V2V) communication \cite{dao2024practical}.

This cooperative approach of MvCP can share sensing information among connected automated vehicles (CAVs), thereby compensating for the degradation in perception precision caused by their individual single-view limitations \cite{xu2022opv2v}. However, there are still challenges in deploying MvCP services in real-time due to the significant computational resource demands and dynamic wireless channel qualities of ITS \cite{li2023computation}. The traditional cloud-based processing paradigm can not satisfy the real-time requirement of MvCP, which may suffer congestion with massive data transmission \cite{zhang2025joint}. Besides, the edge devices (i.e., automated vehicles and road side units) are now equipped with more powerful computational capabilities, enabling them to process computation-intensive intelligent services \cite{liu2025truthful}. In this context, edge intelligence (EI) emerges as a promising solution to decrease transmission latency by processing intelligent services at the edge devices \cite{chen2023edge}.

To facilitate the deployment of intelligent perception services on the edge devices, several studies have been proposed to reduce the processing overhead of perception tasks, including communication overhead \cite{lu2025joint} \cite{wang2020v2vnet} and computation overhead \cite{lu2024crossprune} \cite{jafarpourmarzouni2024towards}. Specifically, the authors of \cite{lu2025joint} and \cite{wang2020v2vnet} introduced the sensor data compression method to reduce the communication overhead. Lu et al. \cite{lu2025joint} proposed a joint optimization problem of cooperative vehicles selection and compression ratio selection to reduce the size of sensing data, while guaranteeing the perception precision requirement. Wang et al. \cite{wang2020v2vnet} adapted the variational image compression algorithm to compress the intermediate representations, and then quantized and encoded the latent representation with few bits for transmission. Additionally, the authors of \cite{lu2024crossprune} and \cite{jafarpourmarzouni2024towards} designed pruning and quantization methods to reduce the computation overhead. Lu et al. \cite{lu2024crossprune} proposed a modal cooperative pruning framework designed for camera-LiDAR fused perception in autonomous driving, which attained superior pruning ratios while minimizing precision loss. S et al. \cite{jafarpourmarzouni2024towards} focused on optimizing model performance through integration of pruning and quantization techniques, achieving faster inference speed with minimal impact on precision. However, the aforementioned literature \cite{lu2025joint, wang2020v2vnet, lu2024crossprune, jafarpourmarzouni2024towards} achieved perception acceleration by sacrificing precision. Moreover, these approaches for perception acceleration considered the overhead decrease of computation or communication independently, and they are difficult to adapt simultaneously to both situations with limited computation resources and limited communication resources.

To tackle the above challenges, we propose a lightweight \underline{f}usion position (FP) \underline{a}djustment among \underline{l}inear \underline{l}ayers (FALL) scheme to accelerate MvCP in edge intelligence empowered ITS. The proposed FALL scheme can achieve MvCP acceleration under both limited computation and limited communication resources situations, which can obtain consistent perception precision with MvCP under the original FP theoretically. The main contributions are summarized as follows:
\begin{itemize}
\item We provide a relaxed definition \textit{conditional additivity} based on the concept of \textit{additivity}. Furthermore, we analyze the \textit{conditional additivity} of DNN linear layers and the DNN model consisting of multiple linear layers.
\item We derive the MvCP precision consistency of the forward and backward FP adjustments among linear layers. Additionally, we analyze the FP adjustment range of MvCP based on the PIXOR model, and validate the precision consistency under different FPs via experiments.
\item We propose the FALL scheme, which can achieve MvCP acceleration under both limited computation and limited communication resources situations without precision loss theoretically. Besides, we examine the acceleration performance of FALL under different transmission rates, showing a latency reduction of up to 74.8\%.
\end{itemize}

\section{System Model}
In this paper, we focus on the intermediate fusion (IF) based MvCP service\cite{xu2022opv2v}, which requires less transmission bandwidth than early fusion and provides more comprehensive information than late fusion \cite{tang2025rocooper}. The inference process of IF-MvCP can be concluded as three stages: 1) feature extraction (FE); 2) feature fusion (FF); 3) object detection (OD). As shown in Fig. \ref{system}, we consider a scenario of IF-MvCP based on additive feature fusion, containing one ego-vehicle and a set $\mathcal{N}=\{1,..., N\}$ of collaborative vehicles (co-vehicles). The vehicles can communicate with each other via PC5-based V2V sidelink\cite{38885}. Each vehicle is equipped with a computing unit for task processing, where the model before FP is deployed and processed at co-vehicles and the ego-vehicle in parallel (FE), and the model behind FP is deployed and processed at ego-vehicle centrally (OD) after additive FF.

\begin{figure}[b]
\centerline{\includegraphics[width=0.38\textwidth]{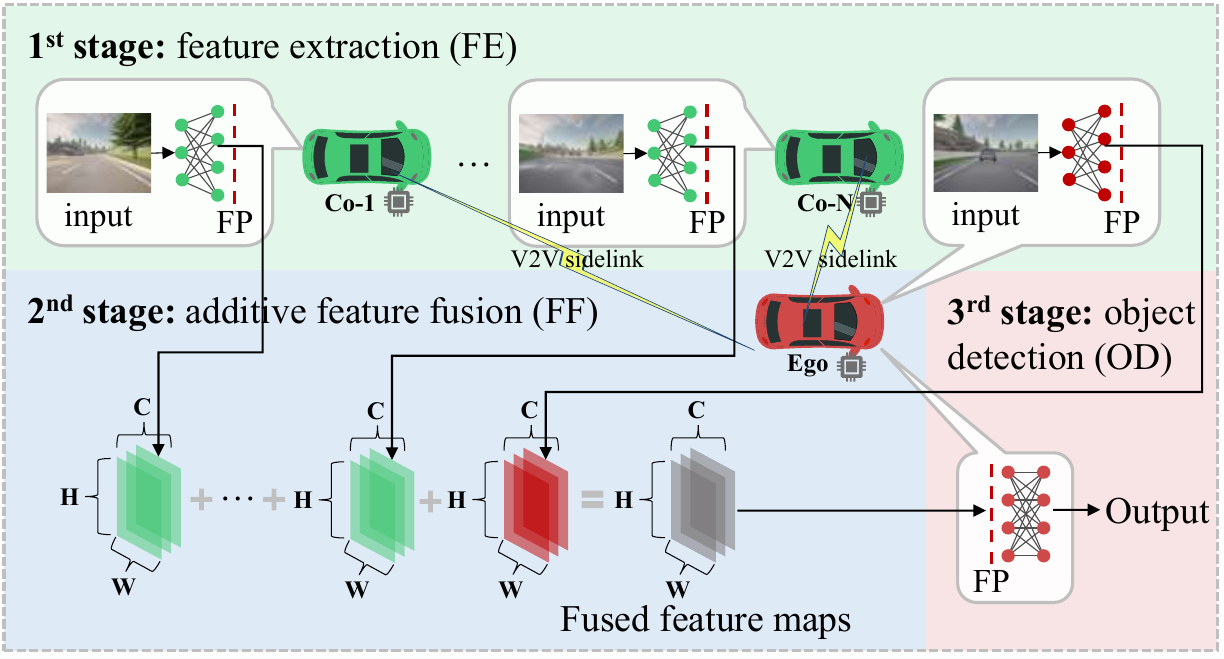}}
\caption{IF-MvCP with additive feature fusion system model.}
\label{system}
\end{figure}

Considering that there are more affine transformations rather than linear transformations within DNN inference, we define the layers based on the affine transformation (e.g., fully connected, convolution, and batch normalization) as linear layers. Note that linear transformations $\mathcal{F}$ satisfy additivity, i.e., $\mathcal{F}(\sum_{i=1}^{n}x_i) = \sum_{i=1}^{n}\mathcal{F}(x_i)$, while affine transmissions do not satisfy additivity because they consist of both linear transformations and translations. To extend the discussion of additivity to DNN inference, we provide a relaxed definition \textit{conditional additivity} drawn inspiration from the \textit{separable} concept introduced in \cite{sun2022serpens}, based on which we will construct the MvCP acceleration scheme. Specifically, the \textit{conditional additivity} is defined as follows.

\begin{definition}
\textbf{(Conditional Additivity)} The function $\mathcal{F}$ is conditional additive if exits functions $\mathcal{F}_i$, for any $x_i, i\in[1,n]$, satisfies $\mathcal{F}(\sum_{i=1}^{n}x_i)=\sum_{i=1}^{n}\mathcal{F}_i(x_i)$.
\end{definition}

\subsection{Conditional Additivity Analysis of DNN Linear Layers}
In this subsection, we provide the analysis of \textit{conditional additivity} for various DNN linear layers, including fully-connected, convolution, deconvolution, batch normalization, and average pooling.

\begin{proposition}
\textbf{(Conditional Additivity of Fully-Connected Layer)} The computation of fully-connected layer is an affine transformation, which is represented as ${{\mathsf{\mathcal{F}}}_{fc}}\left( \mathbf{x} \right)=\mathbf{x}\cdot \mathbf{W}+\mathbf{b}$. $\mathbf{W}$ is the weight matrix and $\mathbf{b}$ is the bias vector. Given $\mathbf{x}=\sum_{i=1}^{n}\mathbf{x}_i$, $\mathcal{F}_{fc}$ can be transformed as: $\mathcal{F}_{fc}(\mathbf{x}) 
= \mathcal{F}_{fc}\left( \sum_{i=1}^{n} \mathbf{x}_i\right) = \sum_{i=1}^{n} \mathbf{x}_i \cdot \mathbf{W} + \mathbf{b} = \sum_{i=1}^{n} \mathbf{x}_i \cdot \mathbf{W} + \sum_{i=1}^{n}\mathbf{b}_i = \sum_{i=1}^{n} \left( \mathbf{x}_i \cdot \mathbf{W} + \mathbf{b}_i \right) = \sum_{i=1}^{n} \mathcal{F}_{fc,i} \left( \mathbf{x}_i\right)$, where $\mathcal{F}_{fc,i} \left( \mathbf{x}_i\right)=\left( \mathbf{x}_i \cdot \mathbf{W} + \mathbf{b}_i \right)$, $\sum_{i=1}^n\mathbf{b}_i=\mathbf{b}$. Therefore, the fully-connected layer satisfies conditional additivity.
\end{proposition}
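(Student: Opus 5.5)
The approach is to exhibit the functions $\mathcal{F}_{fc,i}$ required by the definition of \textit{conditional additivity} directly. The point is to split the affine map $\mathcal{F}_{fc}$ into its linear part $\mathbf{x}\mapsto\mathbf{x}\cdot\mathbf{W}$, which is genuinely additive, and its translation $\mathbf{b}$, which is not. The linear part is handled by distributivity of matrix multiplication. The translation is handled by distributing $\mathbf{b}$ among the $n$ summands, so that each $\mathcal{F}_{fc,i}$ is itself an affine map with the same weight $\mathbf{W}$ but its own share of the bias.

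The steps are as follows.

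First, fix $n$ and choose any decomposition $\mathbf{b}=\sum_{i=1}^n\mathbf{b}_i$. A concrete choice is $\mathbf{b}_1=\mathbf{b}$ and $\mathbf{b}_i=\mathbf{0}$ for $i\ge 2$; the symmetric choice $\mathbf{b}_i=\mathbf{b}/n$ also works. This shows such a decomposition always exists. Importantly, it depends only on $n$ and $\mathbf{b}$, not on the inputs $\mathbf{x}_i$, which is what ``for any $x_i$'' in the definition requires.

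Second, define $\mathcal{F}_{fc,i}(\mathbf{x}_i)=\mathbf{x}_i\cdot\mathbf{W}+\mathbf{b}_i$.

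Third, verify the identity for arbitrary $\mathbf{x}_1,\dots,\mathbf{x}_n$ of compatible dimension:
\[
\mathcal{F}_{fc}\Big(\sum_{i=1}^n\mathbf{x}_i\Big)=\Big(\sum_{i=1}^n\mathbf{x}_i\Big)\cdot\mathbf{W}+\mathbf{b}=\sum_{i=1}^n\mathbf{x}_i\cdot\mathbf{W}+\sum_{i=1}^n\mathbf{b}_i=\sum_{i=1}^n\mathcal{F}_{fc,i}(\mathbf{x}_i).
\]
The second equality uses right-distributivity of the matrix product over vector addition together with the chosen bias split. The third equality regroups the finite sum.

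There is no real obstacle here. The only point needing care is interpreting the definition correctly: the $\mathcal{F}_i$ may depend on the index $i$ and on $n$, but must be fixed before the inputs are chosen. One should also state explicitly that $\mathcal{F}_{fc,i}\neq\mathcal{F}_{fc}$ in general, since this is precisely why the relaxation from additivity is needed. The same template, with linear part plus split bias, will later transfer verbatim to convolution and batch normalization.
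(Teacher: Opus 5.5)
Your proposal is correct and follows the paper's argument exactly: split the bias as $\mathbf{b}=\sum_{i=1}^n\mathbf{b}_i$, apply distributivity of the matrix product, and set $\mathcal{F}_{fc,i}(\mathbf{x}_i)=\mathbf{x}_i\cdot\mathbf{W}+\mathbf{b}_i$. Your explicit choice of split (e.g.\ $\mathbf{b}_1=\mathbf{b}$, $\mathbf{b}_i=\mathbf{0}$ for $i\ge 2$) and your remark that the split is fixed independently of the inputs are small clarifications that the paper leaves implicit.
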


\begin{proposition}
\label{ca_conv}
\textbf{(Conditional Additivity of Convolution Layer)} The computation of each feature patch for a convolution layer is an affine transformation, which is represented as $\mathcal{F}_{conv}\left(\mathbf{x}\right)=\sum_c\sum_{w,h}\mathbf{x}_c\left(j+w,k+h\right)\cdotp\mathbf{K}_c\left(w,h\right)+b$. $\mathbf{x}_c$ is the input feature map of channel $c$, $\mathbf{K}_c$ is the kernel of channel $c$. $(j,k)$ represents the starting pixel of convolution, $(w,h)$ represents the element pixel of kernel, $b$ is the bias. Given $\mathbf{x}=\sum_{i=1}^{n}\mathbf{x}_i$, $\mathcal{F}_{conv}$ can be transformed as: $\mathcal{F}_{conv}(\mathbf{x}) 
= \mathcal{F}_{conv}(\sum_{i=1}^{n}\mathbf{x}_i) = \sum_{i=1}^{n} \left( \sum_{c} \sum_{w,h} \mathbf{x}_{c,i}(j + w, k + h) \cdot \mathbf{K}_c(w,h) + b_{i} \right) = \sum_{i=1}^{n} \mathcal{F}_{conv,i}(\mathbf{x}_i)$, where $\mathcal{F}_{conv,i}(\mathbf{x}_i)=\sum_{c} \sum_{w,h} \mathbf{x}_{c,i}(j + w, k + h) \cdot \mathbf{K}_c(w,h) + b_{i}$, $\sum_{i=1}^nb_i=b$. Therefore, the convolution layer satisfies conditional additivity. Notably, if the hyperparameter padding $p$ of $\mathcal{F}$ is not zero, the padding $p_i$ of $\mathcal{F}_{conv,i}$ should satisfy $\sum_{i=1}^np_i=p$. 
\end{proposition}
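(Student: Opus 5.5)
To prove Proposition~\ref{ca_conv}, I will fix one output position $(j,k)$ (and one output channel, whose kernel is $\{\mathbf{K}_c\}_c$ and bias is $b$) and argue pointwise. Since a convolution layer is just this scalar affine map applied at every output position and channel, establishing the decomposition at a single position gives it for the whole layer.

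\textbf{Step 1: linearity of the weighted sum.} Given $\mathbf{x}=\sum_{i=1}^n\mathbf{x}_i$, channelwise we have $\mathbf{x}_c=\sum_i\mathbf{x}_{c,i}$, and each pixel satisfies $\mathbf{x}_c(j+w,k+h)=\sum_i \mathbf{x}_{c,i}(j+w,k+h)$. Substituting into $\mathcal{F}_{conv}$ and exchanging the finite sums over $c,(w,h)$ with the sum over $i$ gives
\begin{equation*}
\mathcal{F}_{conv}(\mathbf{x})=\sum_{i=1}^n\sum_c\sum_{w,h}\mathbf{x}_{c,i}(j+w,k+h)\,\mathbf{K}_c(w,h)+b.
\end{equation*}

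\textbf{Step 2: distribute the bias.} Choose any scalars $b_i$ with $\sum_i b_i=b$, for instance $b_1=b$ and $b_i=0$ for $i\ge 2$, or $b_i=b/n$. Writing $b=\sum_i b_i$ and regrouping yields $\sum_i\mathcal{F}_{conv,i}(\mathbf{x}_i)$ with $\mathcal{F}_{conv,i}$ as in the statement. This exhibits the required functions, so Definition~1 is satisfied. The argument is identical to the fully-connected case, because the kernel weights are shared across all $i$ and only the bias must be split.

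\textbf{Step 3: padding, which is the main obstacle.} Here the pointwise formula implicitly assumes that the indices $(j+w,k+h)$ refer to the padded input. With zero padding, the padded version of $\mathbf{x}$ equals the sum of the zero-padded $\mathbf{x}_i$, so nothing changes. With a nonzero constant padding value $p$ (which is how I read the statement's condition $\sum_i p_i=p$), the padded border of $\mathbf{x}$ carries the value $p$. I would therefore pad each $\mathbf{x}_i$ with a constant $p_i$. The border pixel identity $\mathbf{x}_c=\sum_i\mathbf{x}_{c,i}$ then holds at every padded location exactly when $\sum_i p_i=p$, and Step~1 goes through verbatim on the padded inputs. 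Stride and dilation only change which $(j,k)$ and offsets are used, not the linear structure, so they need no further treatment.
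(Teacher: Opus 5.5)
Your proof is correct and follows the paper's own derivation. Steps 1--2 are exactly its computation: expand $\mathbf{x}=\sum_i\mathbf{x}_i$ pixelwise, use that the kernels $\mathbf{K}_c$ are shared by all $\mathcal{F}_{conv,i}$, and split the bias as $b=\sum_i b_i$. The paper only asserts the padding condition $\sum_i p_i=p$, and your Step 3 supplies the missing justification by reading $p$ as a constant pad value. That is the reading under which the condition makes sense; with zero padding of any width, each $\mathcal{F}_{conv,i}$ just keeps the same padding and no condition is needed.
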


\begin{proposition}
\label{ca_deconv}
\textbf{(Conditional Additivity of Deconvolution Layer)} The computation of deconvolution is the same as convolution, which can be represented as $\mathcal{F}_{deconv}\left(\mathbf{x}\right)=\mathcal{F}_{conv}\left(\mathbf{x}\right)$. The main difference between deconvolution and convolution is the size of output, which has no affect on the computation process. Therefore, the deconvolution layer satisfies conditional additivity as well, where $\mathcal{F}_{deconv,i}(\mathbf{x}_i)=\mathcal{F}_{conv,i}(\mathbf{x}_i)$.
\end{proposition}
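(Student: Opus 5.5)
The plan is to reduce the deconvolution case to Proposition~\ref{ca_conv}. I will first make precise in what sense ``$\mathcal{F}_{deconv}=\mathcal{F}_{conv}$'' holds, and then transfer the decomposition $\mathcal{F}_{conv,i}$ verbatim.

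\textbf{Step 1: write the deconvolution as a convolution.} A transposed convolution with kernel $\mathbf{K}$, stride $s$, padding $p$ and bias $b$ can be written as follows. Let $\mathcal{U}(\mathbf{x})$ insert $s-1$ zeros between adjacent pixels of $\mathbf{x}$ and pad the border with $k-1-p$ zeros, where $k$ is the kernel size. Let $\tilde{\mathbf{K}}$ be the spatially flipped kernel with input and output channels swapped. Then
\begin{equation*}
\mathcal{F}_{deconv}(\mathbf{x})=\mathcal{F}_{conv}^{\tilde{\mathbf{K}},b}\big(\mathcal{U}(\mathbf{x})\big),
\end{equation*}
computed patchwise exactly as in Proposition~\ref{ca_conv}. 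This identity is the precise content of the statement that the two layers differ only in output size and not in the per-patch computation.

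\textbf{Step 2: check that $\mathcal{U}$ is additive.} The map $\mathcal{U}$ is genuinely linear: it only places entries at fixed positions and fills the rest with zeros. Hence
\begin{equation*}
\mathcal{U}\Big(\sum_{i=1}^n\mathbf{x}_i\Big)=\sum_{i=1}^n\mathcal{U}(\mathbf{x}_i).
\end{equation*}

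\textbf{Step 3: apply Proposition~\ref{ca_conv}.} Applying Proposition~\ref{ca_conv} to the inputs $\mathcal{U}(\mathbf{x}_i)$, with biases $b_i$ satisfying $\sum_i b_i=b$, gives
\begin{equation*}
\mathcal{F}_{deconv}\Big(\sum_i\mathbf{x}_i\Big)=\sum_i\mathcal{F}_{conv,i}\big(\mathcal{U}(\mathbf{x}_i)\big).
\end{equation*}
We then set $\mathcal{F}_{deconv,i}(\mathbf{x}_i):=\mathcal{F}_{conv,i}(\mathcal{U}(\mathbf{x}_i))$, which is the deconvolution with the same kernel, stride and padding but bias $b_i$. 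This matches the claimed form $\mathcal{F}_{deconv,i}=\mathcal{F}_{conv,i}$ under the identification of Step~1.

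\textbf{Main obstacle: padding and border terms.} For convolution, Proposition~\ref{ca_conv} requires $\sum_i p_i=p$. For deconvolution, padding acts by cropping the output rather than by adding zero entries. Cropping is a linear coordinate projection, so it commutes with the sum. I would therefore argue that each $\mathcal{F}_{deconv,i}$ can simply keep the same padding $p$ and the same output-padding. If the paper insists on the convolutional padding convention, I would instead show that the zero-padding in $\mathcal{U}$ contributes only zeros. Either way the bias is the only non-additive term, and splitting it as $\sum_i b_i=b$ settles the claim.
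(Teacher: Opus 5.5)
Your proposal is correct and follows the paper's route: reduce the deconvolution to a convolution and inherit the decomposition of Proposition~\ref{ca_conv}, with the bias split $\sum_i b_i=b$ absorbing the only non-additive term. What you add is a precise form of the identification $\mathcal{F}_{deconv}=\mathcal{F}_{conv}$, which the paper only asserts and which literally holds only after your linear zero-insertion/padding map $\mathcal{U}$ and a kernel flip, so the branch maps are $\mathcal{F}_{conv,i}\circ\mathcal{U}$, i.e.\ deconvolutions with the same kernel, stride and padding and bias $b_i$; your choice to keep the same padding $p$ in every branch, justified by the linearity of zero-padding and cropping, is also the right one.
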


\begin{proposition}
\textbf{(Conditional Additivity of Batch Normalization Layer)} The computation of batch normalization during inference simplifies to an affine transformation, which is represented as ${{\mathsf{\mathcal{F}}}_{bn}}\left( \mathbf{x}\right)=\gamma \left( \frac{\mathbf{x}-\mu }{\sqrt{{{\delta }^{2}}+\varepsilon}} \right)+\beta$. $\mu$ and $\delta^{2}$ are the mean and variance counted according to the training data, $\gamma$ and $\beta$ are trainable parameters, and $\varepsilon$ is a small constant avoiding division by zero. The above parameters are fixed during inference. Given $\mathbf{x}=\sum_{i=1}^{n}\mathbf{x}_i$, $\mathcal{F}_{bn}$ can be transformed as: $\mathcal{F}_{bn}\left(\mathbf{x}\right) 
=\mathcal{F}_{bn}\left(\sum_{i=1}^{n}\mathbf{x}_i\right) = \gamma \left( \frac{\sum_{i=1}^{n} \mathbf{x}_i - \sum_{i=1}^{n} \mu_i}{\sqrt{\delta^2 + \varepsilon}} \right) + \sum_{i=1}^{n} \beta_i = \sum_{i=1}^{n} \left( \frac{\gamma (\mathbf{x}_i - \mu_i)}{\sqrt{\delta^2 + \varepsilon}} + \beta_i \right) = \sum_{i=1}^{n} \mathcal{F}_{bn,i}\left( \mathbf{x}_i\right)$, where $\mathcal{F}_{bn,i}\left( \mathbf{x}_i\right)=\left( \frac{\gamma (\mathbf{x}_i - \mu_i)}{\sqrt{\delta^2 + \varepsilon}} + \beta_i \right)$, $\sum_{i=1}^{n}\beta_i=\beta$, $\sum_{i=1}^{n}\mu_i=\mu$. Therefore, the batch normalization layer satisfies conditional additivity.
\end{proposition}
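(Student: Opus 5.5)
The plan is to exhibit the functions $\mathcal{F}_{bn,i}$ required by the definition of conditional additivity, following the template used for the fully-connected layer. The key structural fact is that at inference time $\mathcal{F}_{bn}$ is a fixed affine map $\mathbf{x}\mapsto a\mathbf{x}+d$. Here $a=\gamma/\sqrt{\delta^2+\varepsilon}$ and $d=\beta-a\mu$, applied channel-wise, because $\mu,\delta^2,\gamma,\beta,\varepsilon$ are frozen constants. So the only obstruction to plain additivity is the translation term. That term can be split among the $n$ summands exactly as the bias $\mathbf{b}$ was split into $\mathbf{b}_i$ in the fully-connected case.

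Concretely, I will choose arbitrary decompositions $\mu=\sum_{i=1}^n\mu_i$ and $\beta=\sum_{i=1}^n\beta_i$. For instance, take $\mu_i=\mu/n$ and $\beta_i=\beta/n$, or give all of $\mu,\beta$ to one vehicle, such as the ego-vehicle, and set the others to zero; the existence claim only needs one choice. Then I substitute $\mathbf{x}=\sum_i\mathbf{x}_i$ into $\mathcal{F}_{bn}$. Since the denominator $\sqrt{\delta^2+\varepsilon}$ is a fixed nonzero scalar per channel, multiplication by $\gamma/\sqrt{\delta^2+\varepsilon}$ distributes over the finite sums $\sum_i\mathbf{x}_i-\sum_i\mu_i$. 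Regrouping the $n$ terms together with $\sum_i\beta_i$ gives $\sum_i\bigl(\gamma(\mathbf{x}_i-\mu_i)/\sqrt{\delta^2+\varepsilon}+\beta_i\bigr)$. This is $\sum_i\mathcal{F}_{bn,i}(\mathbf{x}_i)$ with $\mathcal{F}_{bn,i}$ as stated.

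The only point needing care, more than a real obstacle, is to stress that the argument holds \emph{only} in inference mode. During training, $\mu$ and $\delta^2$ are batch statistics that depend on $\mathbf{x}$ itself, so the map is not affine and the decomposition fails. I will also note that the split requires the per-channel broadcasting of $\mu,\gamma,\beta$ to be identical in each $\mathcal{F}_{bn,i}$, so every vehicle must use the same $\gamma$ and $\delta^2$. Only the additive constants $\mu_i,\beta_i$ are partitioned. This is why the scale factor appears unchanged in each $\mathcal{F}_{bn,i}$, and it completes the verification of the definition.
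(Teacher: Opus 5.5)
Your proposal is correct and follows essentially the same route as the paper. You partition the frozen constants as $\mu=\sum_i\mu_i$ and $\beta=\sum_i\beta_i$, substitute $\mathbf{x}=\sum_i\mathbf{x}_i$, distribute the fixed scale $\gamma/\sqrt{\delta^2+\varepsilon}$, and regroup into $\sum_i\mathcal{F}_{bn,i}(\mathbf{x}_i)$; your remarks that this holds only in inference mode and that every vehicle must share the same scale factor are sound clarifications the paper leaves implicit.
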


\begin{proposition}
\textbf{(Conditional Additivity of Average Pooling)} The computation of average pooling is a linear transformation. To simplify the illustration, we only analyze the calculation within a single pooling window, which is represented as $\mathcal{F}_{ap}\left(\mathbf{x}\right)=\frac{1}{WH}\sum_{w=1:W,h=1:H}\mathbf{x}\left(j+w,k+h\right)$. $(j,k)$ is the starting pixel coordinate of the average pooling, $W \times H$ is the size of pooling window. Given $\mathbf{x}=\sum_{i=1}^n\mathbf{x}_i$, $\mathcal{F}_{ap}$ can be transformed as: $\mathcal{F}_{ap}(\mathbf{x}) 
= \mathcal{F}_{ap}(\sum_{i=1}^n\mathbf{x}_i) = \frac{1}{WH} \sum_{w=1:W,\, h=1:H} \sum_{i=1}^{n} \mathbf{x}_i(j+w, k+h) = \sum_{i=1}^{n} \left( \frac{1}{WH} \sum_{w=1:W,\, h=1:H} \mathbf{x}_i(j+w, k+h) \right) = \sum_{i=1}^{n} \mathcal{F}_{ap,i}(\mathbf{x}_i)$, where $\mathcal{F}_{ap,i}(\mathbf{x}_i) = \mathcal{F}_{ap}(\mathbf{x}_i)$. Therefore, the average pooling layer satisfies conditional additivity (more precisely, it satisfies additivity).
\end{proposition}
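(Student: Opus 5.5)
Since average pooling has no bias term, I will show it is genuinely additive: take $\mathcal{F}_{ap,i}=\mathcal{F}_{ap}$ for every $i$. Conditional additivity then follows from the Definition, with no need to split a bias as in the fully-connected, convolution, and batch-normalization propositions.

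First, I restrict to a single pooling window with starting pixel $(j,k)$ and size $W\times H$. The full layer applies this same map to every window, and possibly to every channel independently. An identity proved for one arbitrary window and channel therefore holds entrywise for the whole output tensor. The only requirement is that all summands $\mathbf{x}_i$ share the shape of $\mathbf{x}$, so that the windows are indexed identically.

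Second, I substitute $\mathbf{x}=\sum_{i=1}^n\mathbf{x}_i$ and use that tensor addition is pointwise: $\mathbf{x}(j+w,k+h)=\sum_i\mathbf{x}_i(j+w,k+h)$. The expression is a finite double sum, so I swap the sum over window pixels with the sum over $i$. I then pull out the constant factor $\frac{1}{WH}$, which is distributive and the same for every $i$. This yields $\sum_i \frac{1}{WH}\sum_{w,h}\mathbf{x}_i(j+w,k+h)=\sum_i\mathcal{F}_{ap}(\mathbf{x}_i)$.

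The main subtlety is not the algebra but boundary handling. If the layer uses zero padding, the padded entries are zero for each $\mathbf{x}_i$ and for their sum, so the argument is unchanged and padding need not be split, unlike in Proposition~\ref{ca_conv}. If instead the implementation divides by the count of valid pixels only (e.g.\ count\_include\_pad disabled), the divisor still depends only on the window position, not on the data. It is therefore the same for all $\mathbf{x}_i$, and linearity is preserved. I will remark on this case to confirm the claim holds for the pooling layers used in practice.
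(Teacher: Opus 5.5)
Your proposal is correct and follows the paper's argument: restrict to one window, substitute $\mathbf{x}=\sum_{i=1}^n\mathbf{x}_i$, exchange the finite sums, factor out $\frac{1}{WH}$, and take $\mathcal{F}_{ap,i}=\mathcal{F}_{ap}$. Your remarks on zero padding and on a divisor that counts only valid pixels go beyond what the paper addresses, and they are sound, since in both cases the map stays linear in the data.
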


\subsection{Conditional Additivity Analysis of DNN Model Consisting of Multiple Linear Layers}
On the one hand, the DNN model has a multi-layer structure, where the output of the former layer is the input of the latter one. Thus, the DNN inference can be considered as a composition function. On the other hand, there are many shortcut and skip connection structures in the DNN model \cite{he2016deep}, which can be regarded as a linear combination function. In this section, we will present the \textit{conditional additivity} analysis for the composition and linear combination of conditional additive functions (CAFs).
\begin{thm}
\label{com}
\textbf{(Composition of CAFs)} If functions $\mathcal{F}_1$, $\mathcal{F}_2$ satisfy conditional additivity, their composition $\mathcal{F}_1(\mathcal{F}_2(\cdot))$ satisfies conditional additivity:
\begin{equation}
    \mathcal{F}_1(\mathcal{F}_2(\sum_{i=1}^nx_i))=\sum_{i=1}^{n}{\mathcal{F}_{1,i}(\mathcal{F}_{2,i}(x_i))}.
\end{equation}
\end{thm}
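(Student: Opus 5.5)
The plan is to show that every conditionally additive function, in the sense of the Definition, is affine, and that its component functions are that same affine map with the bias split. The composition then follows by regrouping the biases. This matters because the identity in Theorem~\ref{com} is not purely formal: the component $\mathcal{F}_{2,i}(x_i)$ has to be a valid input for the decomposition of $\mathcal{F}_1$.

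\textbf{Step 1 (structure of a CAF).} Let $\mathcal{F}(\sum_{i=1}^n x_i)=\sum_{i=1}^n \mathcal{F}_i(x_i)$ hold for all $x_i$.
\begin{itemize}
\item Set $x_j=0$ for all $j\neq i$. This gives $\mathcal{F}_i(x)=\mathcal{F}(x)-c_i$ with $c_i=\sum_{j\neq i}\mathcal{F}_j(0)$.
\item Setting every $x_i=0$ gives $\sum_i \mathcal{F}_i(0)=\mathcal{F}(0)$.
\item Substitute these back with only two nonzero arguments $x,y$. Writing $\mathcal{L}(x)=\mathcal{F}(x)-\mathcal{F}(0)$, this yields $\mathcal{L}(x+y)=\mathcal{L}(x)+\mathcal{L}(y)$, so $\mathcal{L}$ is additive.
\item Additivity gives $\mathbb{Q}$-homogeneity. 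With the continuity that all DNN layers have, $\mathcal{L}$ is linear.
\end{itemize}
Hence $\mathcal{F}=\mathcal{L}+b$ and $\mathcal{F}_i=\mathcal{L}+b_i$, where $b_i=\mathcal{F}_i(0)$ and $\sum_i b_i=b$.

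This recovers exactly the form used in the Propositions for the fully-connected, convolution, batch normalization and pooling layers. For the layers treated in those Propositions, the form can simply be read off from them. For a general CAF, the Cauchy-equation argument above supplies it, and the continuity assumption is needed only there.

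\textbf{Step 2 (composition).} Write $\mathcal{F}_1=\mathcal{L}_1+b_1$ and $\mathcal{F}_2=\mathcal{L}_2+b_2$, with bias splits $\{b_{1,i}\}$ and $\{b_{2,i}\}$. Then
\[
\mathcal{F}_1\Bigl(\mathcal{F}_2\Bigl(\sum_i x_i\Bigr)\Bigr)=\mathcal{L}_1\Bigl(\sum_i(\mathcal{L}_2x_i+b_{2,i})\Bigr)+\sum_i b_{1,i}.
\]
Linearity of $\mathcal{L}_1$ distributes it over the sum, so this equals $\sum_i\bigl(\mathcal{L}_1(\mathcal{L}_2x_i+b_{2,i})+b_{1,i}\bigr)=\sum_i \mathcal{F}_{1,i}(\mathcal{F}_{2,i}(x_i))$.

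An equivalent route is to apply the CAF property of $\mathcal{F}_1$ with inputs $y_i=\mathcal{F}_{2,i}(x_i)$, using $\sum_i y_i=\mathcal{F}_2(\sum_i x_i)$. This is legitimate because the Definition quantifies over arbitrary $y_i$.

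The main obstacle is Step 1. A reader might think the Definition allows nonlinear $\mathcal{F}_i$ that break the composition, so the argument that the $\mathcal{F}_i$ are forced to be affine must be made explicit. The padding remark for convolution also deserves a brief comment. Zero-padding acts as a linear embedding, so splitting the padding across components does not affect additivity beyond the bias split.
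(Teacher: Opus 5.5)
Your proof is correct, but your main route is not the paper's. The paper's proof is exactly your ``equivalent route.'' First apply the decomposition of $\mathcal{F}_2$ to get $\mathcal{F}_1(\sum_i \mathcal{F}_{2,i}(x_i))$. Then apply the decomposition of $\mathcal{F}_1$ to the inputs $y_i=\mathcal{F}_{2,i}(x_i)$. As you note yourself, the Definition quantifies over all inputs, so this two-line substitution is already a complete proof. It needs no continuity, no condition on $n$, and no information about the form of the components. So calling Step 1 ``the main obstacle'' and the identity ``not purely formal'' is misplaced. Nonlinear components cannot break the composition, because the defining identity holds for every choice of inputs.

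What your Step 1 adds is structural insight, not validity. For $n\ge 2$ it shows the components must be $\mathcal{F}_i=\mathcal{F}-c_i$, and that $\mathcal{F}-\mathcal{F}(0)$ is additive. So conditional additivity is essentially affinity with a split bias. This explains why every Proposition in the paper produces components of that form. It also lets you write the composed components explicitly as $\mathcal{L}_1\mathcal{L}_2x_i+(\mathcal{L}_1b_{2,i}+b_{1,i})$, whose bias shares sum to the bias of $\mathcal{F}_1\circ\mathcal{F}_2$.

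Two caveats apply to that route. First, ``every CAF is affine'' fails for $n=1$, where any function qualifies; the theorem is trivial in that case anyway. Second, the continuity hypothesis you bring in is not part of the statement, and you never use it. Step 2 only needs $\mathcal{L}_1$ to distribute over sums, and the Cauchy equation gives that additivity with no regularity assumption.
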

\begin{proof}
$\mathcal{F}_1(\mathcal{F}_2(\sum_{i=1}^nx_i)) = \mathcal{F}_1(\sum_{i=1}^n\mathcal{F}_{2,i}(x_i))$ \\ \hspace*{3.6cm}  $= \sum_{i=1}^{n}{\mathcal{F}_{1,i}(\mathcal{F}_{2,i}(x_i))}.$
\end{proof}
\begin{thm}
\label{lc}
\textbf{(Linear combination of CAFs)} If functions $\mathcal{F}_1$, $\mathcal{F}_2$ satisfy conditional additivity, their linear combination $\alpha\mathcal{F}_1(\cdot)+\beta\mathcal{F}_2(\cdot)$ satisfies conditional additivity:
\begin{equation}
\alpha\mathcal{F}_1(\sum_{i=1}^nx_i)+\beta\mathcal{F}_2(\sum_{i=1}^nx_i) = \sum_{i=1}^n(\alpha\mathcal{F}_{1,i}(x_i) + \beta\mathcal{F}_{2,i}(x_i)).
\end{equation}
\end{thm}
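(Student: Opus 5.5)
We prove the linear-combination statement directly from the definition of conditional additivity, as in the proof of Theorem~\ref{com}. Since $\mathcal{F}_1$ and $\mathcal{F}_2$ are conditional additive, there exist functions $\mathcal{F}_{1,i}$ and $\mathcal{F}_{2,i}$ such that
\[
\mathcal{F}_1\Big(\sum_{i=1}^n x_i\Big)=\sum_{i=1}^n \mathcal{F}_{1,i}(x_i)
\quad\text{and}\quad
\mathcal{F}_2\Big(\sum_{i=1}^n x_i\Big)=\sum_{i=1}^n \mathcal{F}_{2,i}(x_i)
\]
for any $x_1,\dots,x_n$. The plan is to substitute both decompositions into $\alpha\mathcal{F}_1(\sum_i x_i)+\beta\mathcal{F}_2(\sum_i x_i)$. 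Then we use linearity of finite sums, so that scalar multiplication distributes over the sum and the two sums over the same index $i$ merge. This gives $\sum_{i=1}^n(\alpha\mathcal{F}_{1,i}(x_i)+\beta\mathcal{F}_{2,i}(x_i))$.

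The second step is to make explicit that the result again fits the definition. We define the component functions $\mathcal{G}_i(x_i):=\alpha\mathcal{F}_{1,i}(x_i)+\beta\mathcal{F}_{2,i}(x_i)$. The identity above then reads $\mathcal{G}(\sum_i x_i)=\sum_i\mathcal{G}_i(x_i)$ with $\mathcal{G}=\alpha\mathcal{F}_1+\beta\mathcal{F}_2$, so $\mathcal{G}$ is conditional additive.

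The main point needing care is that both components must be evaluated on the same split $x=\sum_i x_i$, since a skip connection feeds the same input to both branches. We also need the outputs of $\mathcal{F}_{1,i}$ and $\mathcal{F}_{2,i}$ to live in a common vector space with matching shapes, so that the addition is defined. This holds for residual blocks built from the linear layers of the preceding propositions.

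It is also worth remarking how the bias shares combine. If, for example, $\mathcal{F}_{1,i}$ carries a bias share $\mathbf{b}_{1,i}$ with $\sum_i\mathbf{b}_{1,i}=\mathbf{b}_1$, then $\mathcal{G}_i$ carries the share $\alpha\mathbf{b}_{1,i}+\beta\mathbf{b}_{2,i}$. These shares sum to the combined bias $\alpha\mathbf{b}_1+\beta\mathbf{b}_2$, so no extra constraint is needed.

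Combined with Theorem~\ref{com}, this lets conditional additivity propagate through arbitrary networks of linear layers with shortcuts.
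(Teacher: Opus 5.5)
Your proof is correct and is the same argument as the paper's: substitute the two decompositions of $\mathcal{F}_1$ and $\mathcal{F}_2$ on the same split and merge the sums over $i$. Naming the component maps $\mathcal{G}_i=\alpha\mathcal{F}_{1,i}+\beta\mathcal{F}_{2,i}$ just makes explicit the fit to the definition of conditional additivity, which the paper leaves implicit.
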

 \begin{proof}
$\alpha\mathcal{F}_1(\sum_{i=1}^nx_i)+\beta\mathcal{F}_2(\sum_{i=1}^nx_i) = \sum_{i=1}^n(\alpha\mathcal{F}_{1,i}) + \sum_{i=1}^n(\beta\mathcal{F}_{2,i}) = \sum_{i=1}^n(\alpha\mathcal{F}_{1,i}(x_i) + \beta\mathcal{F}_{2,i}(x_i))$.
\end{proof}

\section{Precision Consistency of Fusion Position Adjustment Among Linear Layers}
Based on the analysis for \textit{conditional additivity} of DNN linear layers, it can be derived that the FP of IF-MvCP can be adjusted among the linear layers with the same inference precision. In this section, we first derive the precision consistency of the forward and backward FP adjustments, respectively. Afterward, we validate the precision consistency of MvCP service based on the PIXOR model under different FP adjustments.

\subsection{Precision Consistency of FP Adjusted Forward}
Given a trained MvCP model, we denote the original FP as $L_{original}$. Besides, we denote the adjusted forward FP as $L_{forward}$, which is ahead of $L_{original}$. Fig. \ref{fp_forward} illustrates the inference processes under FP at $L_{original}$ and $L_{forward}$. If the intermediate features are fused at $L_{original}$, the output $o$ at $L_{original}$ can be calculated as
\begin{equation}
o=\sum_{i=0}^N{\mathcal{F}_i{(f_i)}},
\end{equation}
where $f_0$ represents the intermediate feature at $L_{forward}$ of ego-vehicle, and $f_i (1 \le i\le N)$ represents that of co-vehicle $i$. $\mathcal{F}_0$ represents the model between $L_{forward}$ and $L_{original}$ deployed at the ego-vehicle, and $\mathcal{F}_i (1 \le i\le N)$ represents that deployed at co-vehicle $i$.

\begin{figure}[b]
\centerline{\includegraphics[width=0.4\textwidth]{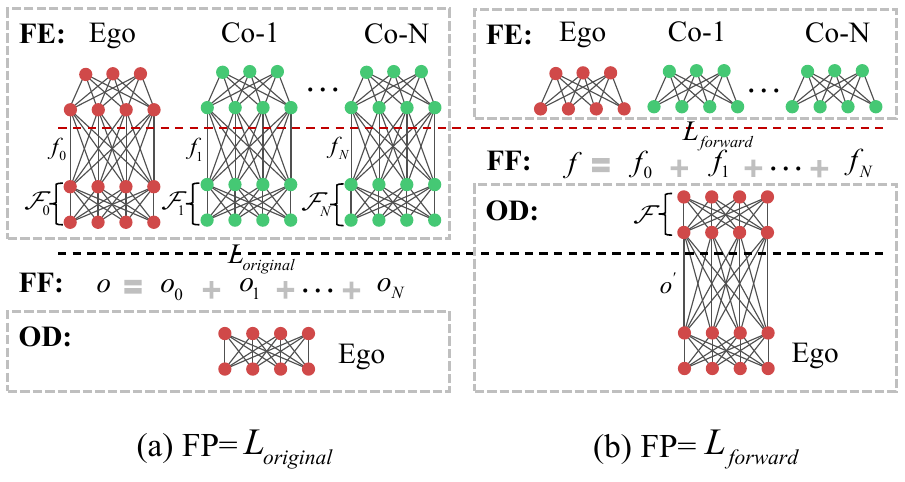}}
\caption{The inference processes under FP at $L_{original}$ and $L_{forward}$.}
\label{fp_forward}
\end{figure}

If the perception features are fused at $L_{forward}$, the output $o^{'}$ at $L_{original}$ can be calculated as
\begin{equation}
o^{'}=\mathcal{F}(\sum_{i=0}^N(f_i)),
\end{equation}
where $\mathcal{F}$ represents the model between $L_{forward}$ and $L_{original}$ deployed at the ego-vehicle centrally.

According to the definition of \textit{conditional additivity}, if $\mathcal{F}$ is conditional additive, the outputs at $L_{original }$ under both cases of FP at $L_{original}$ and $L_{forward}$ are consistent (i.e., $o=o^{'}$). Therefore, we can derive the precision consistency for the case of forward FP adjustment.
\begin{thm}
\label{consis_forward}
For an MvCP service with the original FP $L_{original}$, and $L_{forward}$ ahead of $L_{original}$, assume that the model $\mathcal{F}$ between $L_{forward}$ and $L_{original}$ satisfies conditional additivity. Then, the perception precision under $FP=L_{orignal}$ and $FP=L_{forward}$ is consistent.
\end{thm}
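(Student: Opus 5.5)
The plan is to show that the intermediate output at $L_{original}$ is identical under the two fusion positions. After that point the two pipelines are the same deterministic model, so the final detections, and hence the perception precision, coincide.

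The argument has three steps.

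\textbf{Step 1: identify the per-vehicle sub-models.} Let $\mathcal{F}$ be the central model between $L_{forward}$ and $L_{original}$. Since $\mathcal{F}$ is conditional additive, the definition supplies functions $\mathcal{F}_0,\dots,\mathcal{F}_N$ with
\begin{equation*}
\mathcal{F}\Big(\sum_{i=0}^N f_i\Big)=\sum_{i=0}^N \mathcal{F}_i(f_i)
\end{equation*}
for all features $f_i$. These $\mathcal{F}_i$ are exactly the sub-models deployed at the ego-vehicle ($i=0$) and co-vehicle $i$ in the $FP=L_{original}$ configuration.

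\textbf{Step 2: make this identification concrete.} $\mathcal{F}$ is a stack of linear layers, possibly with shortcut connections. I would build the $\mathcal{F}_i$ layer by layer using the propositions for the fully-connected, convolution, deconvolution, batch normalization and average pooling layers. Each layer keeps its weights, and its bias, mean or padding is split into shares that sum to the original value. These per-layer pieces are then assembled with Theorem~\ref{com} for sequential layers and Theorem~\ref{lc} for skip connections. The resulting decomposition holds for arbitrary inputs $f_i$, which is what is needed, since the features are produced independently by the upstream feature-extraction model on each vehicle.

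\textbf{Step 3: compare the outputs and conclude.} By Step 1 we get $o' = \mathcal{F}(\sum_i f_i) = \sum_i \mathcal{F}_i(f_i) = o$. The upstream extractors are unchanged, so the same $f_i$ enter both configurations. The downstream object-detection model $\mathcal{G}$ after $L_{original}$ is also unchanged, so $\mathcal{G}(o)=\mathcal{G}(o')$. Therefore the detection results are identical input by input, and any precision metric computed over a dataset is consistent.

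The main obstacle is Step 2. The original $FP=L_{original}$ deployment must use the particular split functions $\mathcal{F}_i$ guaranteed by conditional additivity, with bias, mean and padding shares summing correctly. If instead every vehicle ran the full affine $\mathcal{F}$, each bias would be counted $N+1$ times and $o\neq o'$. I would handle this by stating explicitly that the deployed $\mathcal{F}_i$ are defined by these splits. I would also check that the padding split $\sum_i p_i = p$ from Proposition~\ref{ca_conv} is interpreted consistently, i.e.\ as a split of the padding's contribution, which is zero-valued, so that the output dimensions match.
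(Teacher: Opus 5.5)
Your proposal is correct and follows the paper's proof. Both reduce the claim to $o'=\mathcal{F}(\sum_{i=0}^N f_i)=\sum_{i=0}^N\mathcal{F}_i(f_i)=o$ and then apply the unchanged model after $L_{original}$ to get identical results; your Step 2 is unnecessary, since conditional additivity of $\mathcal{F}$ is already assumed. Your remark that the deployed per-vehicle models must coincide with the split functions $\mathcal{F}_i$ makes explicit an identification the paper leaves implicit, though in the forward case the per-vehicle models are fixed by training, so it is really the central $\mathcal{F}$ that must be built (e.g.\ with summed biases).
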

\begin{proof}
The inference results under $FP=L_{orignal}$ and $FP=L_{forward}$ are denoted as $r$ and $r^{'}$, respectively. The model after $L_{original}$ is denoted as $\mathcal{F}_1$. Since $o^{'}=\mathcal{F}(\sum_{i=0}^N(f_i))=\sum_{i=0}^N\mathcal{F}_i(f_i)=o$, $r=\mathcal{F}_1(o)=\mathcal{F}_1(o^{'})=r^{'}$. Therefore, Theorem \ref{consis_forward} is proved.
\end{proof}

\subsection{Precision Consistency of FP Adjusted Backward}
We denote the adjusted backward FP as $L_{backward}$, which is behind $L_{original}$. Fig. \ref{fp_backward} illustrates the inference processes under feature fusion at $L_{original}$ and $L_{backward}$. If the perception features are fused at $L_{original}$, the output $b$ at $L_{backward}$ can be calculated as
\begin{equation}
b = \mathcal{F}^{'}(\sum_{i=0}^No_i),
\end{equation}
where $o_0$ represents the output at $L_{original}$ of ego-vehicle, and $o_i (1\le i\le N)$ represents that of co-vehicle $i$. $\mathcal{F}^{'}$ represents the model between $L_{original}$ and $L_{backward}$ deployed at the ego-vehicle centrally.

\begin{figure}[b]
\centerline{\includegraphics[width=0.4\textwidth]{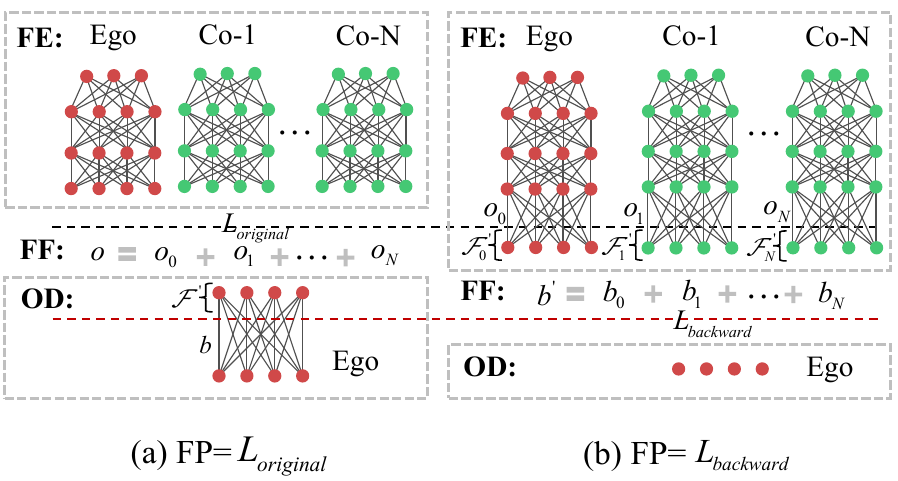}}
\caption{The inference processes under FP at $L_{original}$ and $L_{backward}$.}
\label{fp_backward}
\end{figure}

If the perception features are fused at $L_{backward}$, the output $b^{'}$ at $L_{backward}$ can be calculated as
\begin{equation}
b^{'}=\sum_{i=0}^{N}\mathcal{F}^{'}_{i}(o_i),
\end{equation}
where $\mathcal{F}^{'}_{0}$ represents the model between $L_{original}$ and $L_{backward}$ deployed at the ego-vehicle, and $\mathcal{F}^{'}_{i} (1\le i\le N)$ represents that deployed at co-vehicle $i$.

According to the definition of \textit{conditional additivity}, if $\mathcal{F}^{'}$ is conditional additive, the outputs at $L_{backward}$ under both cases of FP at $L_{original}$ and $L_{backward}$ are consistent (i.e., $b=b^{'}$). Therefore, we can derive the precision consistency for the case of backward FP adjustment.
\begin{thm}
\label{consis_backward}
For an MvCP service with the original FP $L_{original}$, and $L_{backward}$ behind $L_{original}$, assume that the model $\mathcal{F}^{'}$ between $L_{original}$ and $L_{backward}$ satisfies conditional additivity. Then, the perception precision under $FP=L_{original}$ and $FP=L_{backward}$ is consistent.
\end{thm}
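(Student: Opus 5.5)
The argument mirrors the proof of Theorem \ref{consis_forward}, with the roles of the centrally deployed and the distributed sub-models swapped. We denote the inference results under $FP=L_{original}$ and $FP=L_{backward}$ by $r$ and $r^{'}$, respectively. We write $\mathcal{F}_2$ for the part of the model after $L_{backward}$, which in both configurations is deployed centrally at the ego-vehicle. The layers before $L_{original}$ are identical in both cases and run locally at each vehicle, so the local outputs $o_0,\dots,o_N$ at $L_{original}$ are the same in both configurations. The only place the two pipelines can differ is the segment between $L_{original}$ and $L_{backward}$.

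First, we invoke the hypothesis that $\mathcal{F}^{'}$ is conditional additive. By the Definition of conditional additivity, there exist functions $\mathcal{F}^{'}_i$, $0\le i\le N$, such that $\mathcal{F}^{'}(\sum_{i=0}^N o_i)=\sum_{i=0}^N \mathcal{F}^{'}_i(o_i)$ for all inputs. We take exactly these $\mathcal{F}^{'}_i$ as the sub-models deployed at the ego-vehicle and the co-vehicles in the $L_{backward}$ configuration. When $\mathcal{F}^{'}$ is a stack of linear layers, they are obtained concretely by splitting biases, BN means/shifts and paddings as in the propositions for the individual layers, and composing or combining via Theorems \ref{com} and \ref{lc}. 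This gives $b^{'}=\sum_{i=0}^N\mathcal{F}^{'}_i(o_i)=\mathcal{F}^{'}(\sum_{i=0}^N o_i)=b$.

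Second, since $\mathcal{F}_2$ is a deterministic function applied to equal inputs, we get $r=\mathcal{F}_2(b)=\mathcal{F}_2(b^{'})=r^{'}$. Identical inference outputs on every input yield identical perception precision, which proves the claim.

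The main obstacle is not algebraic. It is making sure that the $\mathcal{F}^{'}_i$ used by the vehicles are precisely the decomposition guaranteed by conditional additivity. In particular, the bias, $\beta$, $\mu$ and padding splits must satisfy $\sum_i b_i=b$, $\sum_i\beta_i=\beta$, $\sum_i\mu_i=\mu$ and $\sum_i p_i=p$. The argument also needs the fused signal to be exactly the unweighted sum $\sum_i o_i$, as in additive fusion. I would state this identification explicitly; once it is fixed, the equalities above are immediate.
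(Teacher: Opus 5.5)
Your proof is correct and matches the paper's argument: conditional additivity of $\mathcal{F}^{'}$ gives $b=\mathcal{F}^{'}(\sum_{i=0}^N o_i)=\sum_{i=0}^N\mathcal{F}^{'}_i(o_i)=b^{'}$, and applying the shared downstream model $\mathcal{F}_2$ then yields $r=r^{'}$. Your extra remarks, that the local outputs $o_i$ coincide in both configurations and that the deployed $\mathcal{F}^{'}_i$ must be exactly the decomposition guaranteed by the definition, only make explicit what the paper leaves implicit.
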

\begin{proof}
The inference results under $FP=L_{original}$ and $FP=L_{backward}$ are denoted as $r$ and $r^{'}$, respectively. The model after $L_{backward}$ is denoted as $\mathcal{F}_2$. Since $b=\mathcal{F}^{'}(\sum_{i=0}^No_i)=\sum_{i=0}^N\mathcal{F}_i^{'}(o_i)=b^{'}$, $r=\mathcal{F}_2(b)=\mathcal{F}_2(b^{'})=r^{'}$. Therefore, Theorem \ref{consis_backward} is proved.
\end{proof}

\subsection{Precision Consistency Validation for FP Adjustment}
\begin{figure}[b]
\centerline{\includegraphics[width=0.32\textwidth]{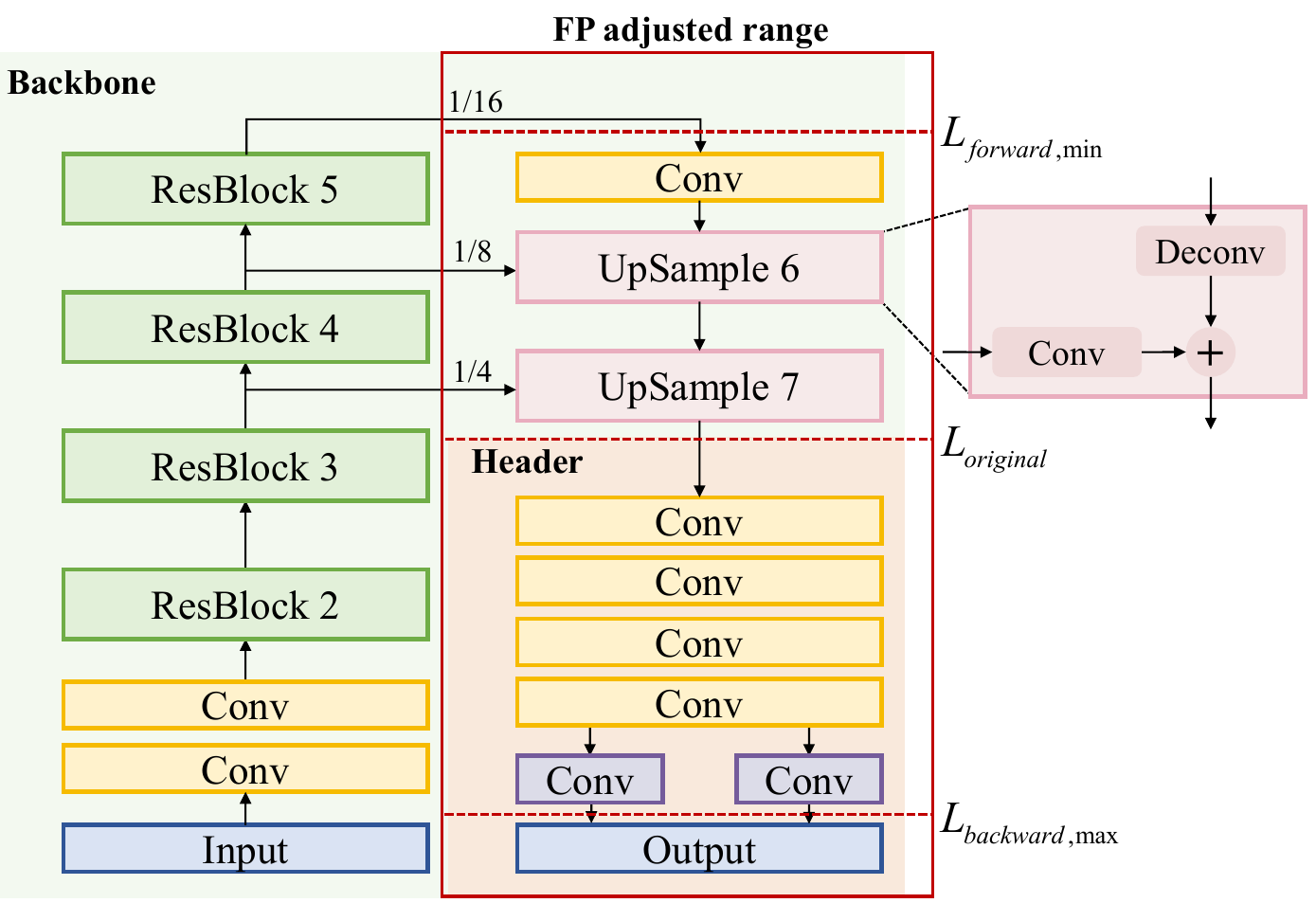}}
\caption{The architecture of PIXOR.}
\label{pixor}
\end{figure}

Specifically, we utilize the MvCP service based on the state-of-the-art model PIXOR \cite{yang2018pixor} to validate the precision consistency of FP adjustment. Fig. \ref{pixor} illustrates the architecture of PIXOR, which consists of a backbone model for feature extraction and a header model for object detection.

It is shown that the UpSample layer of PIXOR is composed of the addition of a convolution and a deconvolution, which is represented as $\mathcal{F}_{upsample}(\mathbf{x})=\mathcal{F}_{conv}(\mathbf{x})+\mathcal{F}_{deconv}(\mathbf{x})$. Because $\mathcal{F}_{conv}$ and $\mathcal{F}_{deconv}$ satisfy \textit{conditional additivity} (refer to Proposition \ref{ca_conv} and \ref{ca_deconv}), it can be derived that $\mathcal{F}_{upsample}$ satisfy \textit{conditional additivity} referring to Theorem \ref{lc}. In addition, since the ResBlocks contain nonlinear structure (i.e., ReLU), they do not satisfy \textit{conditional additivity}. According to Theorem \ref{com}, the composition of CAFs satisfies \textit{conditional additivity}. Therefore, the FP of PIXOR can be adjusted among the linear layers from $L_{forward, min}$ to $L_{backward, max}$ (as shown in Fig. \ref{pixor}) with consistent precision, which can be drawn from Theorem \ref{consis_forward} and \ref{consis_backward}.

Fig. \ref{p_a} shows the detection results of MvCP with three CAVs (including one ego-vehicle and two co-vehicles) under the adjusted forward FP, the original FP, and the adjusted backward FP. The average precisions (AP) at different intersection-over-union (IoU) threshold under different adjusted FPs within the linear layers are illustrated in Table \ref{t_a}. We denote FP adjusted forward $i$ layers as $f_{i}$, and FP adjusted backward $i$ layers as $b_{i}$. The results indicate that the MvCP precision under different adjusted FPs within linear layers is approximately consistent with that under the original FP (with the maximum error not exceeding 0.05), which can further validate the theoretical derivation for the precision consistency of FP adjustment. 


\begin{figure}[tb]
	\centering
	\begin{minipage}[c]{0.17\textwidth}
		\centering
		\includegraphics[width=0.58\textwidth]{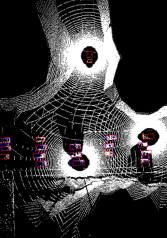}
		\subcaption{\footnotesize Adjusted forward FP}
	\end{minipage} 
	\begin{minipage}[c]{0.1\textwidth}
		\centering
		\includegraphics[width=\textwidth]{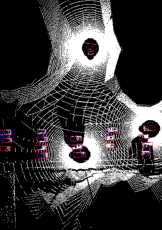}
		\subcaption{\footnotesize Original FP}
	\end{minipage} 
	\begin{minipage}[c]{0.17\textwidth}
		\centering
		\includegraphics[width=0.57\textwidth]{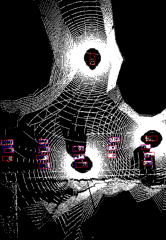}
		\subcaption{\footnotesize Adjusted backward FP}
	\end{minipage}
	\caption{The detection results of IF-MvCP with three CAVs under different FP.}
	\label{p_a}
\end{figure}

\begin{table}[b]
\scriptsize
\centering
\caption{Average precisions of different fusion positions.}
\begin{tabular}{|c|c|c|c|}
  \hline
  \textbf{IoU} & \textbf{0.3} & \textbf{0.5} & \textbf{0.7}\\
  \hline
  original & \underline{0.88} & \underline{0.85} & \underline{0.64} \\
  \hline
  $f_3$ & 0.87 & 0.84 & \textbf{0.63} \\
  $f_2$ & 0.84 & 0.82 & 0.60 \\
  $f_1$ & 0.84 & 0.82 & 0.60 \\
  \hline
  $b_1$ & 0.84 & 0.81 & 0.61 \\
  $b_2$ & 0.83 & 0.81 & 0.60 \\
  $b_3$ & 0.83 & 0.81 & 0.60 \\
  $b_4$ & \textbf{0.88} & \textbf{0.86} & 0.59 \\
  $b_5$ & \textbf{0.88} & \textbf{0.86} & 0.59 \\
  \hline
\end{tabular}
\label{t_a}
\end{table}

\section{MvCP Acceleration Scheme Based on Fusion Position Adjustment Without Precision Loss}

Based on the above analysis for the precision consistency of FP adjustment, we propose the lightweight MvCP acceleration scheme based on \underline{F}P \underline{a}djustment among \underline{l}inear \underline{l}ayers (FALL) without precision loss. Specifically, the FP can be dynamically adjusted according to the system resource situation to achieve MvCP acceleration. Subsequently, we analyze the acceleration capability of the FALL scheme using MvCP based on PIXOR as an example. Fig. \ref{pixor_para} shows the computation workload and intermediate feature size of each layer in PIXOR.
\begin{figure}[b]
\centerline{\includegraphics[width=0.32\textwidth]{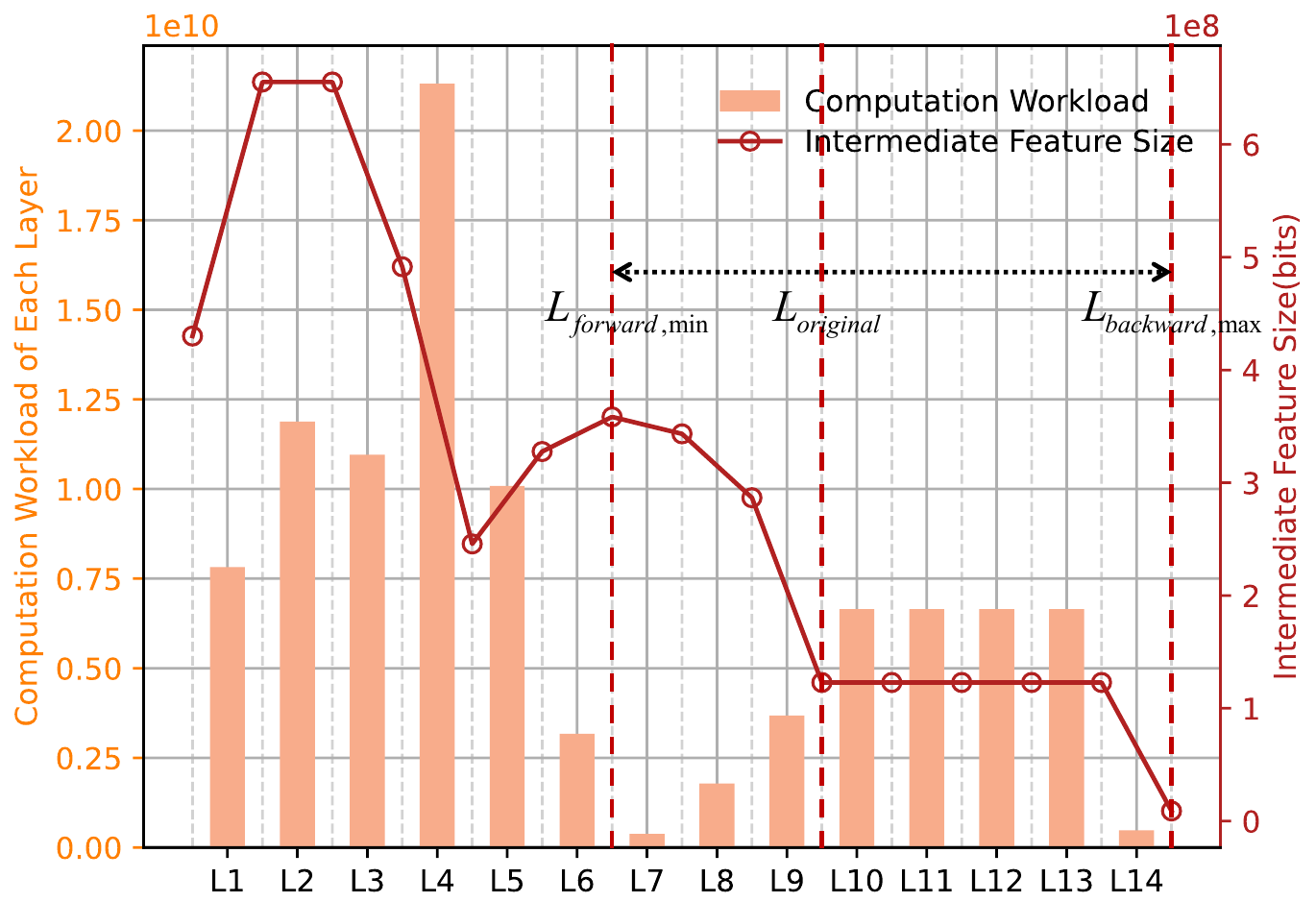}}
\caption{The computation workload and intermediate feature size of each layer in PIXOR.}
\label{pixor_para}
\end{figure}
\subsection{Computation Latency}
Considering the randomness of co-vehicle selection, the computation resources of co-vehicles are variable and potentially lower than ego-vehicle. In general, we assume that the computation resource of each co-vehicle $f_{co,i} (1\le i \le N$) is less than that of the ego-vehicle $f_{ego}$ (i.e., $f_{co,i} < f_{ego}$). Thus, the former FP corresponds to less computation latency, which can be derived as follows. Firstly, we denote the computation latency of FP as $T_{c}(FP)$, which is calculated by
\begin{align}
T_{c}(FP) &= \max(\frac{\sum_{j=1}^{FP}C_j}{f_{co,i}}, \frac{\sum_{j=1}^{FP}C_j}{f_{ego}}) + \frac{\sum_{j=FP+1}^{L_{max}}C_j}{f_{ego}} \nonumber\\
&= \frac{\sum_{j=1}^{FP}C_j}{f_{co,min}} + \frac{\sum_{j=FP+1}^{L_{max}}C_j}{f_{ego}},
\end{align}
where $C_j$ represents the computation workload of the $j$-th layer in PIXOR, $L_{max}$ represents the last layer of PIXOR. We consider two different FPs $L_1$ and $L_2$, which satisfy $L_1 < L_2$. Then, the comparison of computation latency under FP=$L_1$ and FP=$L_2$ is discussed as follows:

\begin{align}
T_{c}(L_1) - T_{c}(L_2) &= \frac{\sum_{j=1}^{L_1}C_j}{f_{co,min}} + \frac{\sum_{j=L_1+1}^{L_{max}}C_j}{f_{ego}} \nonumber\\
&- \left(\frac{\sum_{j=1}^{L_2}C_j}{f_{co,min}} + \frac{\sum_{j=L_2+1}^{L_{max}}C_j}{f_{ego}} \right) \nonumber\\
& = -\frac{\sum_{j=L_1+1}^{L_2}C_j}{f_{co,min}} + \frac{\sum_{j=L_1+1}^{L_2}C_j}{f_{ego}} \nonumber\\
& = \sum_{j=L_1+1}^{L_2}C_j\left(\frac{f_{co,min}-f_{ego}}{f_{co,min}\cdot f_{ego}} \right) < 0.
\end{align}

Therefore, $T_c(L_1) < T_c(L_2)$, which means \textit{the computation latency of a former FP is less than that of a latter FP}.
\subsection{Transmission Latency}
The transmission latency of FP $T_t(FP)$ is calculated by
\begin{equation}
    T_t(FP) = \frac{I_{FP}}{R},
\end{equation}
where $I_{FP}$ represents the intermediate feature size at FP, and $R$ represents the transmission rate of the V2V sidelink. It can be seen from Fig. \ref{pixor_para} that the intermediate feature size for PIXOR of the latter layer is mostly less than (or equal to, such as $L_{10} \sim L_{13}$) that of the former layer. This feature is also generally applicable in other models \cite{liu2024dnn}. Thus, we represent the intermediate feature size of the former FP $L_1$ as $I_1$ and that of the latter FP $L_2$ as $I_2$, which satisfy $I_1 \ge I_2$. Thus, $T_t(L_1) = \frac{I_1}{R} \ge T_t(L_2) = \frac{I_2}{R}$, which means \textit{the transmission latency of a former FP is larger than or equal to that of a latter FP}.

\subsection{Total Latency}
The total latency of MvCP $T_{total}$ is composed of the computation and transmission latency, which is calculated as $T_{total} = T_c + T_t$. From the above discussion about the effect of FP adjustment on $T_c$ and $T_t$, it can be observed that the proposed FALL can reduce $T_c$ and $T_t$, while there is a trade-off between them. Therefore, the optimal FP with minimum latency differs depending on whether the limitation is on computation resources or communication resources. For example, if the computation resources become the performance bottleneck, the optimal FP tends to favor the former layer, while if the communication resources become the performance bottleneck, the optimal FP tends to favor the latter layer. The specific evaluation of FALL for MvCP acceleration under different resource limitations is provided in Section \ref{sim}.

\section{Performance Evaluation}
\label{sim}
In this section, we present the simulation results to compare the acceleration performance of our proposed FALL under different transmission rates. Specifically, the simulation is carried out based on PIXOR, where each parameter size is set as 4 Bytes. The computation resource of each co-vehicle is set to 0.5 TOPS, and the computation resource of the ego-vehicle is set to 30 TOPS.

\begin{figure*}[htb]
	\centering
	\begin{minipage}[c]{0.3\textwidth}
		\centering
		\includegraphics[width=\textwidth]{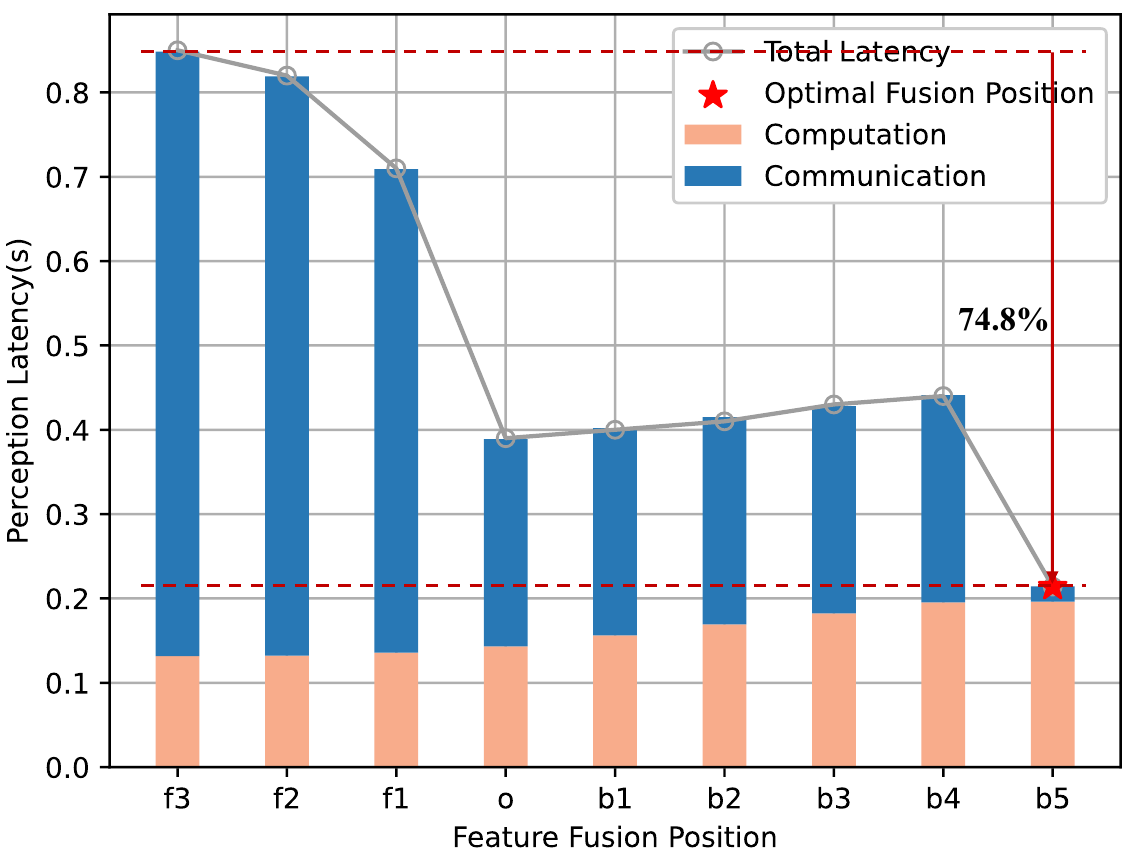}
		\subcaption{\footnotesize $R = 5\times 10^{8}$(bps)}
	\end{minipage} 
	\begin{minipage}[c]{0.3\textwidth}
		\centering
		\includegraphics[width=\textwidth]{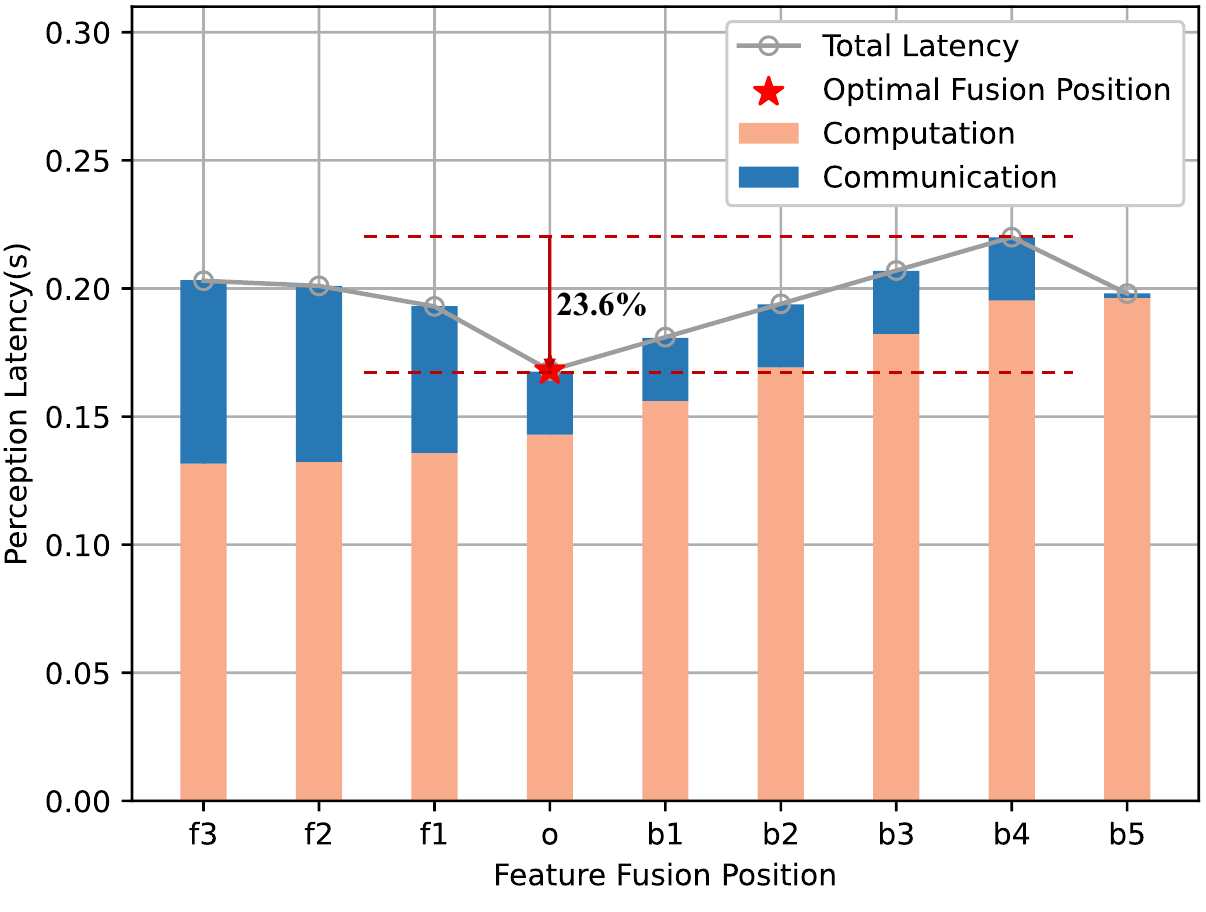}
		\subcaption{\footnotesize $R = 5\times 10^{9}$(bps)}
        \label{p_latency_b}
	\end{minipage} 
	\begin{minipage}[c]{0.3\textwidth}
		\centering
		\includegraphics[width=\textwidth]{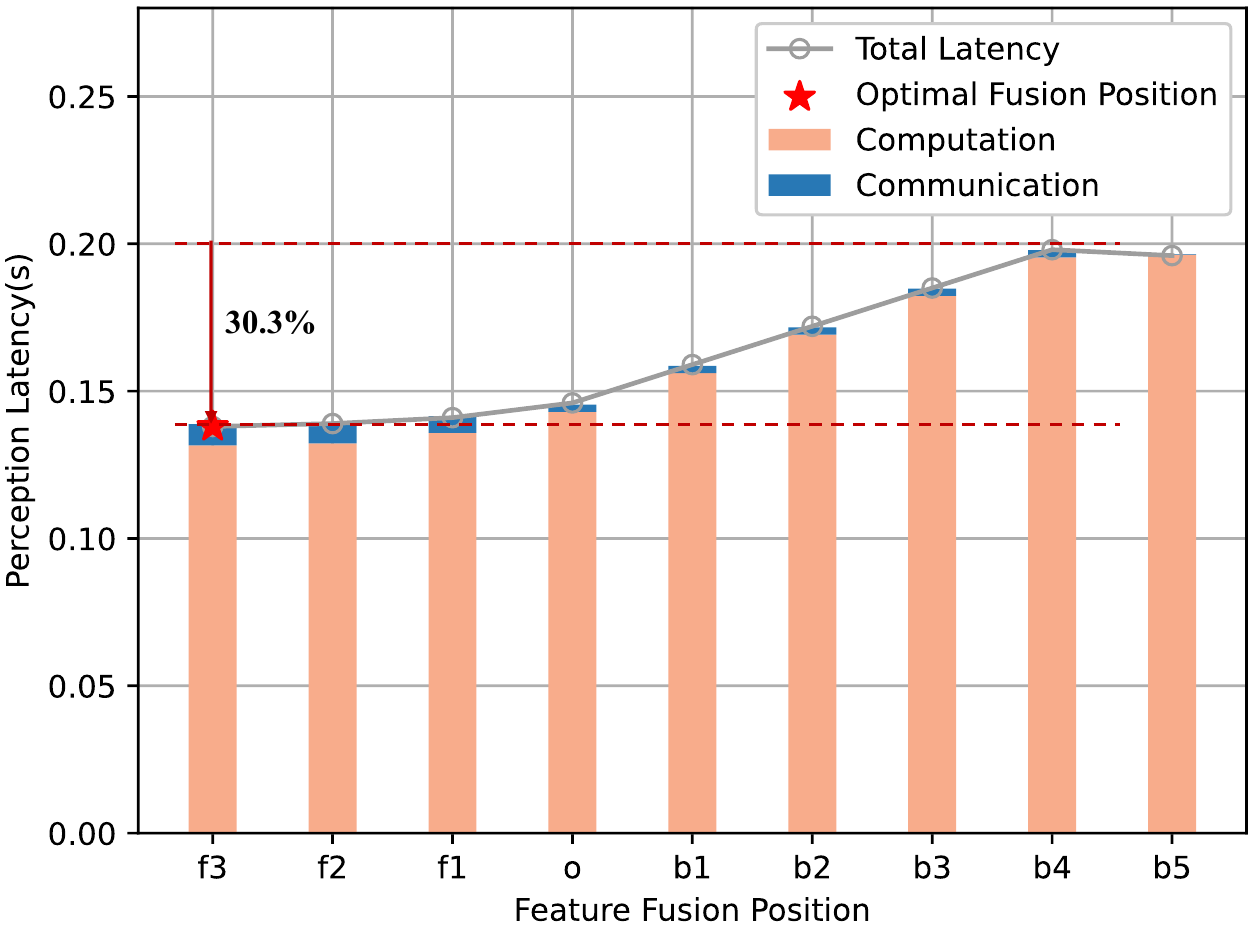}
		\subcaption{\footnotesize $R = 5\times 10^{10}$(bps)}
	\end{minipage}
	\caption{The total perception latency of different FPs under transmission rates $R$ ranging from $5\times10^{8}$ bps to $5\times10^{10}$ bps.}
	\label{p_latency}
\end{figure*}

Fig. \ref{p_latency} shows the total perception latency of different FP under transmission rates $R$ ranging from $5\times 10^{8}$ bps to $5\times 10^{10}$ bps. When $R=5\times 10^{8}$ bps, the communication latency becomes the performance bottleneck. In this case, the optimal FP is $b_5$ with the minimum transferred feature size, which can reduce the total latency by 74.8\% compared to the maximum value at $f_3$. When $R=5\times 10^{10}$ bps, the computation latency becomes the performance bottleneck. In this case, the optimal FP is $f_3$, where the most linear layers are processed at the ego-vehicle with more computation resources than co-vehicles. The total latency can be reduced by 30.3\% compared to the maximum value at $b_4$. When $R=5\times 10^{9}$ bps, the communication latency and computation latency are relatively close. Thus, the optimal FP depends on the trade-off between the communication latency and computation latency, as shown in Fig. \ref{p_latency_b} at the original FP $o$. In this case, the total latency can be reduced by 23.6\% compared to the maximum value at $b_4$.

\section{Conclusion}
In this paper, we investigated a lightweight acceleration scheme for IF-MvCP based on additive feature fusion. Firstly, the analysis of the \textit{conditional additivity} for various DNN linear layers and the DNN model consisting of multiple linear layers was presented. Besides, the precision consistency of the FP adjustment among linear layers was derived. Furthermore, the FALL scheme was proposed to accelerate MvCP while maintaining the perception precision, which can adapt to both situations of limited computation and communication resources. Simulation results validated the effectiveness of the proposed FALL under different limited resource situations.



\bibliographystyle{IEEEtran}
\bibliography{bib}

\end{document}